\newcommand{\FF}{\mathbb{F}}
\newcommand{\HC}{\mathscr{H}}
\newcommand{\ket}[1]{|#1\rangle}                  
\newcommand{\bra}[1]{\left\langle #1 \right|}     
\newcommand{\dyad}[2]{\ket{#1}\bra{#2}}           
\newcommand{\Tr}{{\rm Tr}}                        
\newcommand{\vect}[1]{\bm{#1}}         
\newcommand{\ii}{\mathrm{i}}					  
\def\dya#1{|#1\rangle \langle#1|}
\long\def\ca#1\cb{} 
\newtheorem{theorem}{Theorem}
\newtheorem{lemma}[theorem]{Lemma}
\newtheorem*{definition*}{Definition}
\newcommand{\pt}{A}
\newcommand{\ptc}{B}
\begin{document}

\title{Accessing quantum secrets via local operations and classical communication}

\author{Vlad Gheorghiu}
\email{vgheorgh@gmail.com}
\affiliation{Institute for Quantum Science and Technology,
University of Calgary,
Calgary, AB, T2N 1N4, Canada}
\affiliation{Department of Mathematics and Statistics,
University of Calgary,
Calgary, AB, T2N 1N4, Canada}
\affiliation{Institute for Quantum Computing, University of Waterloo, Waterloo, ON, N2L 3G1}

\author{Barry C. Sanders}
\affiliation{Institute for Quantum Science and Technology,
University of Calgary,
Calgary, AB, T2N 1N4, Canada}
\affiliation{Hefei National Laboratory for Physical Sciences at Microscale,\textcolor{black}
University of Science and Technology of China, P.\ R.\ China}

\date{Version of August 30, 2013}

\begin{abstract}
Quantum secret-sharing  and quantum error-correction schemes rely on multipartite decoding protocols, yet the non-local operations involved are challenging and sometimes infeasible. Here we construct a quantum secret-sharing protocol with a reduced number of quantum communication channels between the players. We introduce a scheme based on embedding a classical linear code into a quantum error-correcting code, then mapping the latter to a quantum secret-sharing protocol. \textcolor{black}{In contrast to the Calderbank-Shor-Steane construction,
 we do not impose any restriction on the classical code; our protocol works with any arbitrary linear code.} Our work paves the way towards the more general problem of simplifying the decoding of quantum error-correcting codes.  
\end{abstract}

\pacs{03.67.Dd, 03.67.Pp, 03.67.Mn}
\maketitle

\section{Introduction}\label{sct1}
Secret sharing is a cryptographic protocol in which a dealer distributes a shared secret among a set of players, so that only certain authorized subsets can collaboratively recover the secret.  The protocol, first introduced by Shamir \cite{Shamir:1979} and Blakley \cite{Blakley}, is important in any area that requires sharing of highly sensitive information, such as bank accounts, missile launch sequences etc. 

The quantum counterpart is a scheme in which the dealer distributes either a classical secret \cite{PhysRevA.59.1829} (string of bits) or a quantum secret (quantum state) \cite{PhysRevLett.83.648} to the set of players via quantum channels \cite{PhysRevA.78.042309,PhysRevA.82.062315}. Quantum secret-sharing is useful for distributing shared quantum keys, non-counterfeitable  ``quantum money" \cite{WiesnerQMoney}, distributed quantum computing \cite{4031361}, secure quantum memory  and multipartite quantum communication \cite{PhysRevLett.89.097905}. For the quantum secret-sharing protocol to be feasible, the dealer is assumed to be ``powerful" -- she can prepare arbitrary quantum states and reliably distribute them to the players. The players have full access to universal quantum computers and can communicate among themselves via quantum channels so that only certain authorized subsets can recover (decode) the secret. 
  The decoding operation is harder to implement than in the classical case, as it requires quantum communication which is expensive.

Reducing the amount of quantum communication required for the decoding can improve the efficiency of distributed cryptographic protocols in which a subset of the players have restricted communication capabilities. Consider for example a quantum secret-sharing scheme with players divided into two subsets, one of which is computationally powerful (each player has access to universal quantum computation and all players can use quantum communication), whereas the other one is computationally weak (each player has access to local universal quantum computers but the players can use only classical communication between them). One such instance is a secret-sharing scheme between Earth (the computationally powerful subset) and, for example, the International Space Station (the computationally weak subset). 

Reducing the amount of quantum communication (i.e. reducing the number of non-local operations involved) also helps simplify the decoding of quantum error-correcting codes \cite{PhysRevA.52.R2493,PhysRevA.54.1098}, which are of crucial importance for the construction of a real-world fault-tolerant quantum computer. 

In this article we solve the following problem. For a large class of quantum secret-sharing schemes constructed from classical linear error-correcting codes \cite{MacWilliamsSloane:TheoryECC}, we show that their decoding can be simplified by replacing some of the quantum channels among the players by classical ones. Inspired by the Calderbank-Shor-Steane (CSS) construction \cite{NielsenChuang:QuantumComputation} we embed a classical linear error-correcting code into a quantum code and then show that this embedding induces a quantum secret-sharing scheme in which all players have to collaborate to recover the secret. In this protocol some of the players are only required to perform local measurements and share their measurement results via classical channels.
\textcolor{black}{In contrast to the CSS construction, which uses two classical linear codes (with the restriction that one is weakly dual to the other~\cite{NielsenChuang:QuantumComputation}),
our protocol works with any \emph{arbitrary} classical linear error-correcting code, with no restriction whatsoever.}

\textcolor{black}{The remainder of this article is organized as follows. In Sec.~\ref{sct2} we introduce our secret-sharing protocol, followed by describing the decoding operation, and then introduce ``optimal schemes". 
Section~\ref{sct3} presents an illustrative example, and we conclude in Sec.~\ref{sct4}.}

\section{The secret sharing protocol}\label{sct2}
\subsection{Embedding a classical code into a quantum subspace}
We begin by considering an $[n,k,d]_q$ classical error-correcting code over $\FF_q$, the Galois field with 
\mbox{$q=p^m$}
elements, where $p$ is prime and $m$ is a positive integer. The parameter $k$ denotes the number of encoded \emph{dits} (generalization of a bit that allows holding more than 2 states), $n$ is the number of carriers and $d$ is the distance of the code. We can represent such a code compactly using a $k\times n$ \emph{generator matrix} $G$ with elements in $\FF_q$. Each codeword ($n$-tuple in $\FF_q^n$ \footnote{Throughout the paper we represent tuples as row vectors, and use the transpose symbol $T$ to denote a column vector.}) can then be written as  
\begin{equation}\label{eqn1}
\vect{x}\cdot G\:=\sum_{ij}x_iG_{ij},
\end{equation}
where $\vect{x}$ is a $k$-tuple in $\FF_q^k$,
for a total number of codewords equal to $q^k$, where the addition and multiplication in \eqref{eqn1} are over the finite field $\FF_q$.  One can regard $G$ as a linear mapping from the ``input" space $\FF_q^k$ to the ``output" (or encoded) subspace of $\FF_q^n$, see the top of our diagram (mapping 1) in Fig.~\ref{fgr1}. 

We use the elements $\vect{x}\in \FF_q^k$ to label the basis vectors of $\HC^{\otimes k}$, the Hilbert space of $k$ qudits, and denote the collection of the orthonormal basis vectors by $\{\ket{\vect{x}}\}_{\vect{x}\in\FF_q^k}$  (see the mapping 4 in Fig.~\ref{fgr1}). Similarly we embed the elements $\vect{x}\cdot G\in\FF_q^n$ into a subspace of the $\HC^{\otimes n}$ spanned by the collection of orthonormal vectors $\{\ket{\vect{x}\cdot G}\}_{\vect{x}\in\FF_q^k}$, as depicted by mapping 2 in Fig.~\ref{fgr1}. Note that $\HC^{\otimes k}$ is isomorphic with Span$\{\ket{\vect{x}\cdot G}\}_{\vect{x}\in\FF_q^k}$ through an encoding isometry $V$, see the bottom of our diagram (mapping 3) in Fig.~\ref{fgr1}. In particular, the isometry $V$ can be explicitly constructed from the generating matrix $G$ using a simple quantum circuit that consists of controlled-NOT gates; see Sec.~10.5.8 of Ref.~\cite{NielsenChuang:QuantumComputation}. 

\begin{figure}
\centering
\includegraphics{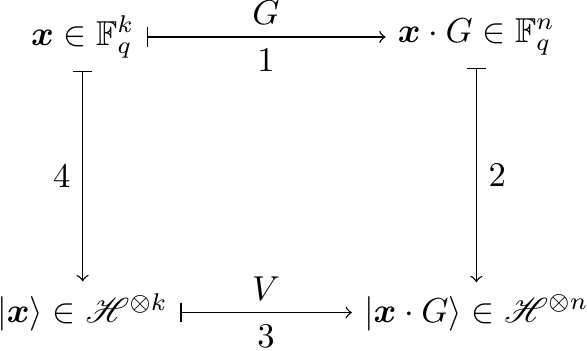}
\caption{A schematic for the various embeddings used throughout the article.}
\label{fgr1}
\end{figure}

We have all the ingredients to construct a quantum secret-sharing scheme as follows.
A dealer holds a $k$-qudit quantum secret 
\begin{equation}\label{eqn2}
\ket{\psi}=\sum_{\vect{x}\in \FF_q^k}c(\vect{x})\ket{\vect{x}},
\end{equation}
with $c(\vect{x})$ normalized complex coefficients. The secret is then distributed to a set of $n$ players using the isometric encoding $V$, so the state shared by the players is 
\begin{equation}\label{eqn3}
\ket{\Psi}=\sum_{\vect{x}\in \FF_q^k}c(\vect{x}) \ket{\vect{x}\cdot G}.
\end{equation}

\subsection{Decoding via local operations and classical communication}
We next design a decoding protocol in which all $n$ players have to collaborate; however, just a proper subset $\pt$ of the entire set of players $P$ is required to use local operations and classical communication (LOCC) with the complementary subset $\ptc$. The latter subset can then fully recover the quantum secret.
Our scheme is depicted in Fig.~\ref{fgr2}. 
Our secret-sharing scheme is imperfect (or ``ramp"); i.e., there exist subsets of players that may extract partial information about the secret. However, we can transform it to a perfect (or ``threshold") quantum secret-sharing scheme via ``twirling" and allowing the dealer to share extra classical communication channels with the players \cite{6225432,PhysRevA.85.052309}.
\begin{figure}
\includegraphics[scale=0.35]{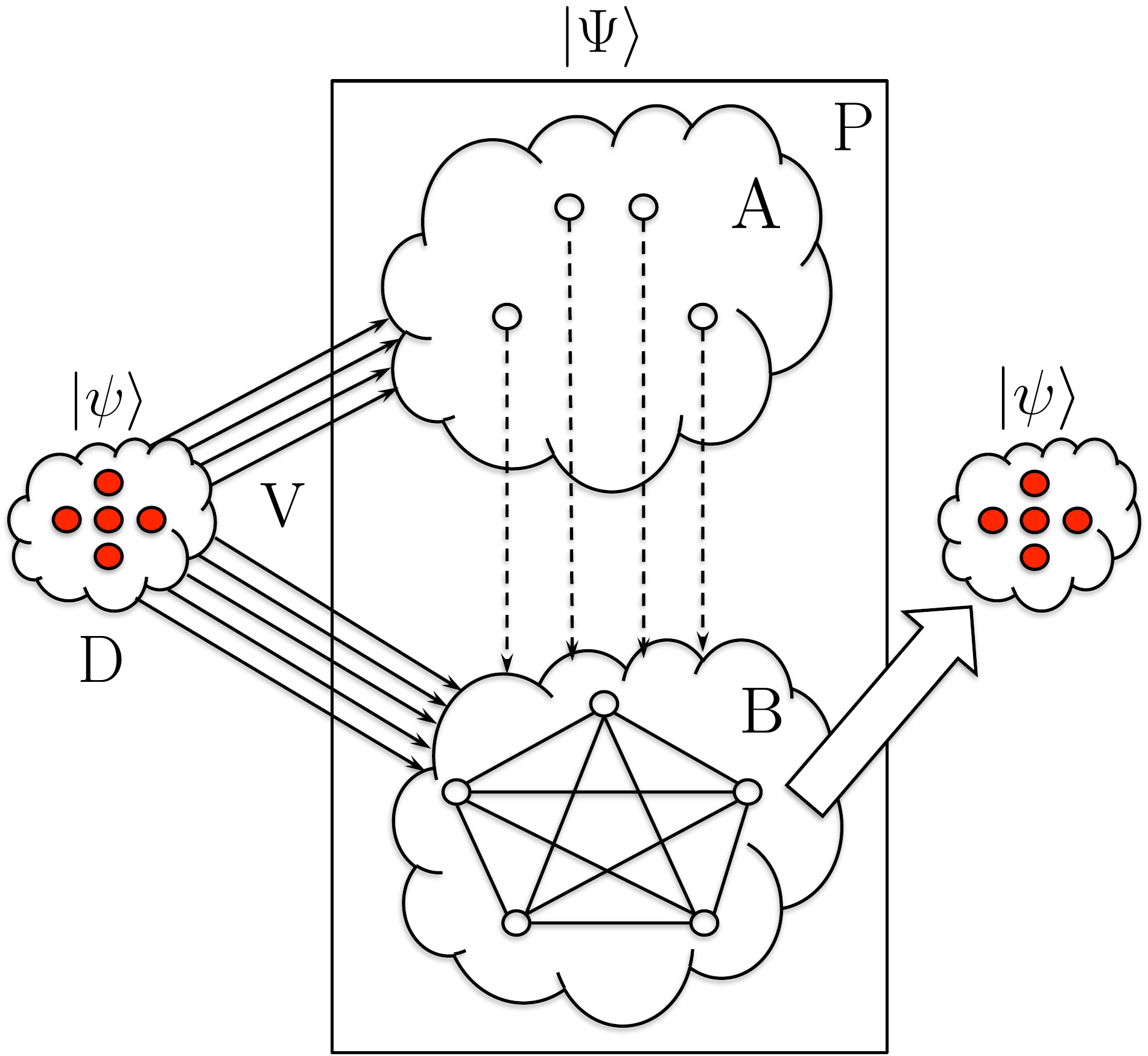}
\caption{(Color online) LOCC-recoverable quantum secret-sharing. The dealer $D$ encodes $k$ qudits (red-filled circles) in state $\ket{\psi}$ via the isometry $V$  to realize the state $\ket{\Psi}$ and then transmits it to the players $P$ (with each player denoted by an empty circle) via quantum channels (solid lines indicate single-qudit channels). The players in $A$ perform local measurements and each one communicates via classical channels (dashed lines) with all players in $B$, who are all connected via quantum channels. Finally the players in $B$ perform a global quantum operation to recover $\ket{\psi}$.}
\label{fgr2}
\end{figure}

\begin{definition*}\label{dfn1}
A subset $\pt$ of the entire set of players $P$ is \emph{LOCC-assisting} for its complement $\ptc$ whenever there exists an LOCC scheme that $\pt$ can perform, followed by sending the measurement results to $\ptc$, so that $\ptc$ can fully recover the quantum secret.
\end{definition*}

Inspired by the CSS construction, we employ the concept of embedding a classical code into a quantum code, but use the technique to construct a differrent way of decoding quantum secret-sharing schemes. The most important outcome of our scheme is a drastic reduction of the number of interplayer quantum communication channels required for the decoding. Our main result is summarized below.

\begin{theorem}\label{thm1}
Let $[n,k,d]_q$ be a classical error-correcting code with generator matrix $G$, and let $\ket{\Psi}$ be a $k$-qudit quantum secret distributed to a set of $n$ players using the isometric encoding
\begin{equation}\label{eqn4}
\ket{\vect{x}}\stackrel{V}{\longmapsto}\ket{\vect{x}\cdot G}.
\end{equation}
Let $\pt$ be a subset of the carrier qudits, and let $\ptc$ denote its complement.  Let $G_\ptc$ be the matrix obtained by removing the columns that correspond to the players in $\ptc$ from $G$. Then the subset $\pt$ is LOCC-assisting for its complement $\ptc$
\emph{if and only if}
\begin{equation}\label{eqn5}
\mathrm{rank}(G_{\ptc}) = k.
\end{equation}
\end{theorem}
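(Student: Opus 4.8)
The plan is to fix the bipartition $P=\pt\sqcup\ptc$, with $\pt$ the assisting (measuring) set and $\ptc$ the recovering set, and to reorganize the generator matrix by its columns as $G=[\,G_\pt\mid G_\ptc\,]$, where $G_\pt$ and $G_\ptc$ collect the columns held by the players in $\pt$ and $\ptc$ respectively. Because player $i$ holds the single qudit labelled by the $i$-th entry of $\vect{x}\cdot G$, the encoded state factorizes register-by-register across the cut as
\begin{equation}
\ket{\Psi}=\sum_{\vect{x}\in\FF_q^k}c(\vect{x})\,\ket{\vect{x}\cdot G_\pt}_\pt\,\ket{\vect{x}\cdot G_\ptc}_\ptc .
\end{equation}
The hypothesis $\mathrm{rank}(G_\ptc)=k$ is precisely the statement that the $\FF_q$-linear map $\vect{x}\mapsto\vect{x}\cdot G_\ptc$ is injective. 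I would prove the two implications separately: sufficiency by exhibiting an explicit LOCC protocol, and necessity by an indistinguishability argument.

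For sufficiency I would have every player in $\pt$ measure its qudit in the Fourier (dual) basis $\{\ket{\widetilde{w}}\}_{w\in\FF_q}$, record the joint outcome $\vect{u}\in\FF_q^{|\pt|}$, and broadcast it classically to $\ptc$. A short calculation gives the (unnormalized) post-measurement state of $\ptc$,
\begin{equation}
\sum_{\vect{x}\in\FF_q^k}c(\vect{x})\,\chi\!\left(\vect{u}\cdot(\vect{x}\cdot G_\pt)\right)\ket{\vect{x}\cdot G_\ptc}_\ptc ,
\end{equation}
where $\chi(a)=\om^{\mathrm{tr}(a)}$ is the additive character of $\FF_q$, $\om=\expo{2\pi\ii/p}$ and $\mathrm{tr}$ the trace $\FF_q\to\FF_p$. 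Injectivity of $\vect{x}\mapsto\vect{x}\cdot G_\ptc$ makes $\{\ket{\vect{x}\cdot G_\ptc}\}_{\vect{x}}$ orthonormal, so $\ptc$ can coherently read off $\vect{x}$ into an ancilla; since $\vect{u}$ and $G_\pt$ are known, $\ptc$ computes and cancels the phase $\chi(\vect{u}\cdot(\vect{x}\cdot G_\pt))$ with a diagonal unitary, and finally applies the inverse encoding $\ket{\vect{x}\cdot G_\ptc}\mapsto\ket{\vect{x}}$ to obtain $\ket{\psi}=\sum_{\vect{x}}c(\vect{x})\ket{\vect{x}}$. One checks that every outcome $\vect{u}$ occurs with the same probability $q^{-|\pt|}$ and is corrected in this way, so recovery is deterministic; note that no condition whatsoever is placed on $G_\pt$, which is the source of the ``arbitrary code'' feature.

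For necessity I would argue the contrapositive. If $\mathrm{rank}(G_\ptc)<k$ there is a nonzero $\vect{z}$ with $\vect{z}\cdot G_\ptc=\vect{0}$, and the family of secrets $\alpha\ket{\vect{0}}+\beta\ket{\vect{z}}$ encodes to $\big(\alpha\ket{\vect{0}\cdot G_\pt}+\beta\ket{\vect{z}\cdot G_\pt}\big)_\pt\otimes\ket{\vect{0}}_\ptc$, a product state whose $\ptc$-marginal is the fixed pure state $\ket{\vect{0}}_\ptc$, independent of $(\alpha,\beta)$. Because the global state is a product across the cut, any measurement performed by $\pt$ leaves $\ptc$ in $\ket{\vect{0}}_\ptc$, and whatever correction $\ptc$ applies for a given broadcast outcome produces a state that does not depend on $(\alpha,\beta)$, so it cannot reproduce the $(\alpha,\beta)$-dependent secret; hence recovery is impossible and $\mathrm{rank}(G_\ptc)=k$ is forced. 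I expect the main obstacle to be the bookkeeping in the sufficiency direction---correctly deriving the character phase over $\FF_q$ and verifying that $\ptc$ can undo it \emph{coherently} using only the classical data $\vect{u}$---whereas the necessity direction rests on the cleaner principle that classical communication from $\pt$ cannot transport quantum information that the encoding has placed entirely on $\pt$.
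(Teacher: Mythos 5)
Your proof is correct, and the sufficiency direction is essentially the paper's own argument: the same column split $G=[\,G_\pt\mid G_\ptc\,]$, the same Fourier-basis measurement on $\pt$ producing the character phase $\om^{\pm\mathrm{tr}(\vect{x}\cdot G_\pt\cdot\vect{a}^T)}$ on the residual state, and the same observation that $\mathrm{rank}(G_\ptc)=k$ makes $\{\ket{\vect{x}\cdot G_\ptc}\}$ orthonormal so that a phase-correcting decoding isometry exists (the paper splits the explicit correction $V_\ptc Z^{\vect z}$ off into a separate theorem, but the content is identical). Where you genuinely diverge is the necessity direction. The paper argues that when $\mathrm{rank}(G_\ptc)<k$ the post-measurement states $\ket{\vect{x}\cdot G_\ptc}$ span too small a space to ``fit'' the secret, a dimension count that, as written, only rules out inverting the particular Fourier-measurement protocol and leans on cited no-go results for full generality. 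Your contrapositive instead exhibits a nonzero $\vect z$ with $\vect z\cdot G_\ptc=\vect 0$ and notes that the secrets $\alpha\ket{\vect 0}+\beta\ket{\vect z}$ all encode to states that are product across the $\pt|\ptc$ cut with the \emph{same} $\ptc$-marginal $\ket{\vect 0}_\ptc$; since no operation on $\pt$ can alter $\ptc$'s reduced state and $\ptc$ receives only classical data, its output cannot depend on $(\alpha,\beta)$. This is protocol-independent and self-contained, so it actually proves the ``only if'' against \emph{arbitrary} LOCC assistance rather than against the specific measurement scheme --- a modest but real strengthening of the paper's presentation. (One small caution: uniformity of the outcome probabilities $q^{-|\pt|}$ relies on orthonormality of $\{\ket{\vect{x}\cdot G_\ptc}\}$ and so holds in the rank-$k$ case where you invoke it, but not necessarily otherwise.)
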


\begin{proof}
Consider that each player in $\pt$ performs a local measurement in the Fourier basis $\{\ket{\bar x}=F\ket{x}\}_{x\in \FF_q}$, where $F$ is the generalized Fourier matrix defined as
\begin{equation}\label{eqn6}
F:=\frac{1}{\sqrt{q}}\sum_{x,z\in \FF_q}\omega^{\mathrm{tr}(xz)}\dyad{z}{x}, \quad\omega:=\exp(2\pi\ii/p), 
\end{equation}
where $\mathrm{tr}(x)$ denotes the ``trace" \cite{quantph.9705052,doi:10.1142/S0129054103002011} of an element $x\in \FF_{q=p^m}$,
\begin{equation}\label{eqn7}
\mathrm{tr}:\FF_q\longrightarrow\FF_p,\quad x\stackrel{\mathrm{tr}}{\longmapsto} \sum_{i=0}^{m-1} x^{p^i}\in\FF_p.
\end{equation}
We denote by $a_k\in \FF_q$ the label of the measurement result of the $k$th player. For compactness we collect all measurement results that the players in $\pt$ perform into a vector $\vect{a}\in \FF_q^{|\pt|}$, where $|\pt|$ represents the number of players in $\pt$. Let 
\begin{equation}\label{eqn8}
\ket{\Psi}_{\ptc,\vect{a}}:=\frac{\Tr_{\pt}\left[(\dya{\vect{a}}\otimes I_{\ptc})\ket{\Psi}\right])}
{\lVert\Tr_{\pt}\left[(\dya{\vect{a}}\otimes I_{\ptc})\ket{\Psi}\right])\rVert}
\end{equation}
be the normalized resultant state of the remaining $\ptc$ players, given that the results of the measurements by $\pt$ are $\vect{a}$. 
From \eqref{eqn6} all measurement results have the same probability, independent of the secret $\ket{\psi}$ in \eqref{eqn2}, 
\begin{equation}\label{eqn9}
p(\vect{a})=\lVert\Tr_{\pt}\left[(\dya{\vect{a}}\otimes I_{\ptc})\ket{\Psi}\right])\rVert^2=\frac{1}{q^{|\pt|}}.
\end{equation}
The resultant state on $\ptc$, given the measurement result $\vect{a}$, is
\begin{equation}\label{eqn10}
\ket{\Psi}_{\ptc,\vect{a}}=\sum_{\vect{x}\in \FF_q^k}c(\vect{x})\omega^{-\mathrm{tr}(\vect{x}\cdot G_{\pt}\cdot \vect{a}^T)}\ket{\vect{x}\cdot G_\ptc}.
\end{equation}
Note that rank$(G_{\ptc})\leqslant k$, as $G_{\ptc}$ is obtained from the rank-$k$ generator matrix $G$ by removing columns from the latter.

If rank$(G_{\ptc})<k$, the number of mutually orthogonal states in \eqref{eqn10} is less than the dimension $q^k$ of the quantum secret $\ket{\psi}$ in \eqref{eqn2}, or, equivalently, dim(span($\{\ket{\vect{x}\cdot G_{\ptc}}\}_{\vect{x}\in \FF_q^k}))<$dim(span($\{\ket{\vect{x}}\}_{\vect{x}\in\FF_q^k}))$. In this case it is impossible to map the state $\ket{\Psi}_{\ptc,\vect{a}}$ in $\eqref{eqn10}$ back to $\ket{\psi}$ by an isometry that does not depend on the coefficients $c(\vect{x})$; there is not enough ``space" to ``fit" the secret $\ket{\psi}$ in the state $\ket{\Psi}_{\ptc,\vect{a}}$ and information is irreversibly lost \cite{PhysRevA.61.042311, PhysRevA.76.062320}. 

On the other hand the vectors $\{\ket{\vect{x}\cdot G_{\ptc}}\}_{\vect{x}\in \FF_q^k}$ are mutually orthogonal if and only if rank$(G_{\ptc}) = k$. In this latter case, the state $\ket{\Psi}_{\ptc,\vect{a}}$ \emph{can} be mapped back to the original secret $\ket{\psi}$ via a \emph{decoding isometry} (that depends on the measurement results $\vect{a}$) defined via
\begin{equation}\label{eqn11}
\ket{\vect{x}\cdot G_{\ptc}} \mapsto \omega^{\mathrm{tr}(\vect{x}\cdot G_{\pt}\cdot \vect{a}^T)}\ket{\vect{x}}.
\end{equation}
\end{proof}
Our next result shows how to construct the above decoding isometry explicitly.
\begin{theorem}\label{thm2}
Let $\pt$ be an  LOCC-assisting subset for its complement $\ptc$. Then the decoding isometry for $\ptc$ is a product of a local unitary operation, which depends only on the measurement results $\vect{a}$, and an isometry that depends only on the subset $\ptc$. A corresponding decoding quantum circuit can be constructed explicitly.
\end{theorem}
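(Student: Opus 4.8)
The plan is to start from the explicit decoding isometry of Eq.~\eqref{eqn11} and exhibit the claimed factorization by isolating the $\vect{a}$-dependence into a diagonal phase that turns out to act separately on each recovered qudit. Writing $\vect{b}:=G_{\pt}\cdot\vect{a}^T\in\FF_q^k$, the phase exponent in Eq.~\eqref{eqn11} is $\mathrm{tr}(\vect{x}\cdot\vect{b})=\mathrm{tr}\big(\sum_{i=1}^k x_i b_i\big)$. First I would invoke the additivity (indeed $\FF_p$-linearity) of the field trace in Eq.~\eqref{eqn7} to split this into $\sum_{i=1}^k\mathrm{tr}(x_i b_i)$, so that the overall phase factorizes as a product $\prod_{i=1}^k\omega^{\mathrm{tr}(x_i b_i)}$ of single-qudit phases. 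This is the crucial observation: although $\vect{b}$ (hence the phase) depends on the measurement outcomes, each factor touches only one of the $k$ logical qudits.

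Given this, I would define the single-qudit diagonal gate $Z_b\ket{x}=\omega^{\mathrm{tr}(xb)}\ket{x}$ (unitary, since its diagonal entries are $p$-th roots of unity) and set $U_{\vect{a}}:=\bigotimes_{i=1}^k Z_{b_i}$, a local unitary on the $k$ recovered qudits depending only on $\vect{a}$ through $\vect{b}=G_{\pt}\vect{a}^T$. Separately I would define the $\vect{a}$-independent isometry $W:\ket{\vect{x}\cdot G_{\ptc}}\mapsto\ket{\vect{x}}$; by Theorem~\ref{thm1} the hypothesis that $\pt$ is LOCC-assisting is equivalent to $\mathrm{rank}(G_{\ptc})=k$, which makes the vectors $\{\ket{\vect{x}\cdot G_{\ptc}}\}$ orthonormal and $W$ a well-defined isometry determined entirely by $G_{\ptc}$, i.e.\ by $\ptc$. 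It then remains only to check that the composition reproduces Eq.~\eqref{eqn11}: $U_{\vect{a}}W\ket{\vect{x}\cdot G_{\ptc}}=U_{\vect{a}}\ket{\vect{x}}=\omega^{\mathrm{tr}(\vect{x}\cdot\vect{b})}\ket{\vect{x}}$, which is exactly the decoding isometry, establishing the desired product structure.

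For the explicit circuit, I would build $W$ exactly as the encoding isometry $V$ was obtained from $G$ in Sec.~\ref{sct2}, but applied to the reduced matrix $G_{\ptc}$: since $\mathrm{rank}(G_{\ptc})=k$, Gaussian elimination puts $G_{\ptc}$ into a standard form, after which the associated encoding map $\ket{\vect{x}}\mapsto\ket{\vect{x}\cdot G_{\ptc}}$ is realized by a network of generalized controlled-NOT (SUM) gates as in Sec.~10.5.8 of Ref.~\cite{NielsenChuang:QuantumComputation}. Running this network in reverse (its Hermitian conjugate $V_{\ptc}\ad$), together with a fixed $\vect{a}$-independent linear-reversible relabeling coming from the row operations used in the reduction, which is absorbed into $W$, yields the decoder. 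The local unitary $U_{\vect{a}}$ is then a single layer of single-qudit phase gates $Z_{b_i}$, with the classical data $\vect{b}=G_{\pt}\vect{a}^T$ computed from the outcomes broadcast by $\pt$.

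I expect the main technical point to be the explicit construction of $W$: one must verify that reducing the non-square, rank-$k$ matrix $G_{\ptc}$ to standard form and re-expressing the result as a SUM-gate network (and tracking the harmless row-operation basis change) indeed implements $\ket{\vect{x}\cdot G_{\ptc}}\mapsto\ket{\vect{x}}$. By contrast, the factorization of the phase is immediate from the additivity of the trace and is the conceptual heart of the statement, since it is precisely what renders the $\vect{a}$-dependent correction a \emph{local} operation.
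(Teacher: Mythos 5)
Your proof is correct, but your factorization is genuinely different from the paper's. The paper keeps the phase correction on the \emph{carrier} qudits held by $\ptc$: it applies $Z^{\vect{z}}$ (a tensor product of single-qudit gates on the $n-|\pt|$ physical shares) \emph{before} the fixed isometry $V_{\ptc}$, where $\vect{z}$ is any solution of $G_{\ptc}\cdot\vect{z}^T=G_{\pt}\cdot\vect{a}^T$ --- a system that is solvable precisely because $\mathrm{rank}(G_{\ptc})=k$ makes $\vect{z}\mapsto G_{\ptc}\vect{z}^T$ surjective onto $\FF_q^k$. You instead push the correction to the \emph{output} side, writing the decoder as $U_{\vect{a}}W$ with $U_{\vect{a}}=\bigotimes_i Z_{b_i}$ acting on the $k$ decoded logical qudits, where $\vect{b}=G_{\pt}\vect{a}^T$ is computed directly; the additivity of the field trace (Frobenius) then factorizes the phase into single-qudit gates exactly as you say, and your $Z_{b_i}$ is just the operator of Eq.~\eqref{eqn12}. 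Your route buys simplicity: no linear system over $\FF_q$ needs to be solved, and the correction is an explicit function of the broadcast outcomes. The paper's route buys an operational nicety: each player in $\ptc$ can apply $Z^{z_i}$ to their own physical share upon receiving the classical data, before the joint isometry is performed, which matches the narrative of Fig.~\ref{fgr2} more closely. Since $\ptc$ must carry out a joint quantum operation in either case, the two decompositions are equivalent for the purposes of the theorem, and both establish the claimed product of an $\vect{a}$-dependent local unitary with an $\vect{a}$-independent isometry built from the SUM-gate network for $G_{\ptc}$.
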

\begin{proof}
For some $\vect{z}\in \FF_q^{n-|A|}$ consider the action of $Z^{\vect{z}}:=Z^{Z^1}\otimes\cdots\otimes Z^{Z^{n-|A|}}$ on $\ket{\Psi}_{\ptc, \vect{a}}$ in \eqref{eqn10}, where $Z^z$ is the generalized single-qudit Weyl-Heisenberg operator \cite{quantph.9705052,doi:10.1142/S0129054103002011} defined as
\begin{equation}\label{eqn12}
Z^z:=\sum_{x\in \FF_q}\omega^{\mathrm{tr}(zx)}\dya{x},\text{ for }z\in \FF_q.
\end{equation}
The resultant state is
\begin{align}\label{eqn13}
Z^{\vect{z}}& \ket{\Psi}_{\ptc, \vect{a}}= \sum_{\vect{x}\in \FF_q^k}c(\vect{x})\omega^{-\mathrm{tr}(\vect{x}\cdot G_{\pt}\cdot \vect{a}^T)}Z^{\vect{z}}\ket{\vect{x}\cdot G_\ptc}\notag\\
&=\sum_{\vect{x}\in \FF_q^k}c(\vect{x})\omega^{-\mathrm{tr}(\vect{x}\cdot G_{\pt}\cdot \vect{a}^T)}\omega^{\mathrm{tr}(\vect{x}\cdot G_{\ptc}\cdot\vect{z}^T)}\ket{\vect{x}\cdot G_\ptc}\notag\\
&=\sum_{\vect{x}\in \FF_q^k}c(\vect{x})\omega^{\mathrm{tr}\left[\vect{x}\cdot (G_{\ptc}\cdot\vect{z}^T-G_{\pt}\cdot \vect{a}^T)\right]}\ket{\vect{x}\cdot G_\ptc}.
\end{align}
We now claim that there always exists a $\vect{z}\in \FF_q^{n-|\pt|}$ such that 
\begin{equation}\label{eqn14}
G_{\ptc}\cdot\vect{z}^T=G_{\pt}\cdot \vect{a}^T.
\end{equation}
As $\pt$ is LOCC-assisting for $\ptc$, Theorem~\ref{thm1} implies that rank$(G_{\ptc})=k$.  This fact implies that \eqref{eqn14} admits at least one solution $\vect{z}$ which can be found by elementary linear algebra methods over finite fields. Therefore, for such a solution $\vect{z}$, the operator $Z^{\vect{z}}$ eliminates all phases in \eqref{eqn10}
\begin{equation}\label{eqn15}
Z^{\vect{z}} \ket{\Psi}_{\ptc, \vect{a}}= \sum_{\vect{x}\in \FF_q^k}c(\vect{x})\ket{\vect{x}\cdot G_\ptc},
\end{equation}
which implies at once that the resultant state \eqref{eqn15} can be mapped back to the original secret \eqref{eqn2} by an isometry $V_{\ptc}$ defined by 
\begin{equation}\label{eqn16}
\ket{\vect{x}\cdot G_{\ptc}}\stackrel{V_{\ptc}}{\longmapsto} \ket{\vect{x}}.
\end{equation}
The overall recovery procedure can be written as $V_{\ptc}Z^{\vect{z}}$, where $V_{\ptc}$ is independent of the measurement results and depends only on the subset $\ptc$, and $Z^{\vect{z}}$ is a local unitary correction that depends only on the measurement results $\vect{a}$. The isometry $V_{\ptc}$ is the adjoint of the quantum circuit that maps $\ket{\vect{x}}$ to $\ket{\vect{x}\cdot G_{\ptc}}$ and can be constructed explicitly, similarly to the construction of the encoding isometry $V$. 
\end{proof}

\subsection{Optimal schemes}
As our goal is to reduce the number of quantum communication channels among the players, we aspire to construct schemes in which the LOCC-assisting subsets $\pt$ are as large as possible. The restriction $|\ptc|\geqslant k$ must be satisfied, as otherwise information is lost and thus there is no way for $\ptc$ to recover the quantum secret faithfully~\cite{PhysRevA.61.042311}. We now show that there exist ``optimal" schemes for which  $|\ptc|=k$, which requires the following lemma.
\begin{lemma}\label{lma1}
Every subset $\ptc$ of size $|\ptc|>n-d$, where $d$ is the distance of the underlying classical $[n,k,d]_q$ code, can fully recover the secret $\ket{\psi}$ by LOCC assistance from its complement $\pt$.
\end{lemma}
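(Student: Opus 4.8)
The plan is to reduce the statement to the rank criterion of Theorem~\ref{thm1} and then invoke the defining property of the minimum distance. By Theorem~\ref{thm1}, the subset $\pt$ is LOCC-assisting for its complement $\ptc$ --- equivalently, $\ptc$ can fully recover $\ket{\psi}$ --- exactly when $\mathrm{rank}(G_\ptc)=k$, where $G_\ptc$ denotes the $k\times|\ptc|$ submatrix consisting of the columns of $G$ indexed by the players in $\ptc$. It therefore suffices to show that the hypothesis $|\ptc|>n-d$ forces $\mathrm{rank}(G_\ptc)=k$.

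First I would recast the rank condition as an injectivity statement. Since $G$ has rank $k$, the map $\vect{x}\mapsto\vect{x}\cdot G$ is injective on $\FF_q^k$, and $\mathrm{rank}(G_\ptc)=k$ is equivalent to the map $\vect{x}\mapsto\vect{x}\cdot G_\ptc$ having trivial kernel. Next I would interpret a hypothetical nonzero kernel element in terms of codeword supports: if $\vect{x}\neq\vect{0}$ satisfies $\vect{x}\cdot G_\ptc=\vect{0}$, then the codeword $\vect{x}\cdot G$ vanishes on every coordinate belonging to $\ptc$, so its support lies entirely within $\pt$ and its Hamming weight is at most $|\pt|=n-|\ptc|$. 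The assumption $|\ptc|>n-d$ yields $|\pt|<d$, which would make $\vect{x}\cdot G$ a nonzero codeword of weight strictly below $d$, contradicting the definition of $d$ as the minimum weight of a nonzero codeword. Hence the kernel is trivial, $\mathrm{rank}(G_\ptc)=k$, and the claim follows from Theorem~\ref{thm1}.

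The step I expect to demand the most care is the passage between the linear-algebraic rank of the punctured generator matrix and the combinatorial support and weight of codewords, since this is precisely the classical coding-theory fact that deleting fewer than $d$ coordinates of a generator matrix cannot reduce its rank below $k$. Everything else is bookkeeping: once this bridge is in place the distance bound closes the argument at once. I would also note in passing that the threshold is tight, since the support of any minimum-weight codeword has size $d$, and taking $\ptc$ to be its complement produces a set of size exactly $n-d$ that violates the rank condition.
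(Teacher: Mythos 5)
Your proof is correct and follows the same route as the paper: reduce to the rank criterion of Theorem~\ref{thm1} and invoke the fact that deleting fewer than $d$ columns of $G$ cannot drop its rank below $k$. The paper simply asserts this characterization of the distance in its Eq.~\eqref{eqn17}, whereas you supply the underlying support/weight argument (and the tightness remark), so your write-up is if anything more complete.
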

\begin{proof}
This follows at once as the distance $d$ of the classical code is
\begin{equation}\label{eqn17}
d=1+\max_r\left[ {\mathrm{rank}(G_{\ptc})=k, \forall\ptc\text{ with }|\ptc|=n-r}\right],
\end{equation}
which is to say that one can arbitrarily remove at most $d-1$ columns from the generator matrix $G$ without changing the rank of the resultant $G_{\ptc}$. Therefore, the maximum $r$ in \eqref{eqn17} is $d-1$, which implies that the minimum size of $\ptc$ has to be at least $n-(d-1)=n-d+1$.
\end{proof}

Lemma~\ref{lma1} implies at once that efficient (in terms of quantum communication) LOCC-recoverable secret-sharing schemes are obtained from classical $[n,k,d]_q$ codes that maximize the distance for fixed $n$ and $k$. Such an example is constituted by the class of maximum distance separable (MDS) codes that achieve equality in the classical Singleton bound \cite{NielsenChuang:QuantumComputation}, 
\begin{equation}\label{eqn18}
n-k=d-1.
\end{equation}

\begin{theorem}\label{thm3}
An MDS classical code $[n,k,n-k+1]_q$ induces a quantum secret-sharing scheme in which every subset $\ptc$ of size $k$ or more can recover the quantum secret by LOCC assistance from its complement $\pt$. Furthermore, if $|\ptc|=k$, the scheme is optimal in terms of the number of quantum communication channels required among the players.
\end{theorem}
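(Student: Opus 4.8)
The plan is to split the statement into its recovery claim and its optimality claim. The recovery claim I would obtain directly from Lemma~\ref{lma1}: for an MDS code the Singleton bound \eqref{eqn18} is saturated, so $d=n-k+1$ and therefore $n-d=k-1$. Lemma~\ref{lma1} then asserts that every $\ptc$ with $|\ptc|>n-d=k-1$, that is every $\ptc$ with $|\ptc|\geqslant k$, can fully recover the secret by LOCC assistance from $\pt$, while Theorem~\ref{thm2} supplies the explicit decoding circuit $V_{\ptc}Z^{\vect{z}}$. This half needs essentially no new work beyond substituting the MDS distance into the lemma.

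For optimality I would first establish the lower bound $|\ptc|\geqslant k$ and then observe that an MDS code meets it with equality. By Theorem~\ref{thm1} any recovering $\ptc$ must satisfy $\mathrm{rank}(G_{\ptc})=k$, and since the rank of a matrix cannot exceed its number of columns we have $\mathrm{rank}(G_{\ptc})\leqslant|\ptc|$, forcing $|\ptc|\geqslant k$; this is the quantitative form of the remark that a $k$-qudit secret cannot be held by fewer than $k$ carriers. In the protocol of Fig.~\ref{fgr2} the only interplayer quantum channels are those internal to $\ptc$, because the players in $\ptc$ must be mutually quantum-connected to run the global recovery isometry of Theorem~\ref{thm2}, whereas each player of $\pt$ is linked to $\ptc$ purely classically. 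The number of quantum channels among the players is therefore a monotonically increasing function of $|\ptc|$, so minimizing it is the same as minimizing $|\ptc|$. Combining the lower bound $|\ptc|\geqslant k$ with the achievability $|\ptc|=k$ from the recovery half shows that the MDS scheme with $|\ptc|=k$ realizes the minimum, and is thus optimal.

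The main obstacle will be not the algebra, which is immediate, but pinning down the right notion of optimality: one must commit to what counts as a ``quantum communication channel among the players'' and verify that this count really is minimized by shrinking $\ptc$ to its smallest admissible size. Once the convention of Fig.~\ref{fgr2} is adopted---quantum links confined to $\ptc$, all $\pt$--$\ptc$ links classical---the monotonicity in $|\ptc|$ is clear, and the matching bounds $|\ptc|\geqslant k$ and $|\ptc|=k$ close the argument.
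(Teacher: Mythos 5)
Your proposal is correct and follows essentially the same route as the paper, whose proof of Theorem~\ref{thm3} simply invokes Lemma~\ref{lma1} after substituting the MDS distance $d=n-k+1$ (so $n-d=k-1$ and $|\ptc|>n-d$ becomes $|\ptc|\geqslant k$). The optimality half you spell out --- $\mathrm{rank}(G_{\ptc})\leqslant|\ptc|$ forcing $|\ptc|\geqslant k$, and quantum channels being confined to $\ptc$ --- is exactly the reasoning the paper leaves implicit in the remarks preceding Lemma~\ref{lma1}, so you have merely made explicit what the authors take for granted.
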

\begin{proof}
The proof follows immediately from Lemma~\ref{lma1}.
\end{proof}

\section{Example}\label{sct3}
We illustrate our formalism by a concrete example (simple enough to be worked out by hand).

\emph{Example.} Consider a classical repetition code $[n=3,k=1,d=3]_2$ with generator matrix $G=\begin{pmatrix}1 & 1 & 1\end{pmatrix}$ and note that this is an MDS code. The corresponding classical codewords $000$ and $111$ are embedded into two quantum states, $\ket{000}$ and $\ket{111}$, respectively. A secret $\ket{\psi}=c(0)\ket{0}+c(1)\ket{1}$ is distributed to three players as $\ket{\Psi}=c(0)\ket{000}+c(1)\ket{111}$. Theorem~\ref{thm3} implies that any subset $\ptc$ of size $|\ptc|\geqslant 1$ can recover the secret by requiring the players in $\pt$ to perform measurements in the $\{\ket{+},\ket{-}\}$ basis and then sending the measurement results back to $\ptc$. 

Without loss of generality assume that players 1 and 2 perform measurements, with measurement results $a_1\in \FF_2$ and $a_2\in \FF_2$, respectively. The resultant state on the third player is
\begin{equation}\label{eqn19}
\ket{\Psi}_{\{3\},\vect{a}}=c(0)\ket{0}+(-1)^{a_1\oplus a_2}c(1)\ket{1},
\end{equation}
where $\oplus$ denotes addition mod $2$.
Whenever $a_1$ and $a_2$ have the same parity, the third player does not have to do anything. When $a_1$ and $a_2$ are different, then the third player has to apply a $Z$ operator to remove the phase in \eqref{eqn19}. The combined effect can be achieved by player 3 applying the operator $Z^{a_1\oplus a_2}$.

Applying directly our formalism, we have $\pt=\{1,2\}$, $\ptc=\{3\}$, $G_{\pt}=\begin{pmatrix}1 & 1\end{pmatrix}$ and $G_{\ptc}=\begin{pmatrix}1\end{pmatrix}$. Using \eqref{eqn10} we can write the resultant state $\ket{\Psi}_{\{3\},\vect{a}}$ after the measurement performed by the subset $\pt$ in exactly the same form as \eqref{eqn19}.
The operator $Z^z$ the player $\ptc=\{3\}$ has to apply can be found using \eqref{eqn14}, which yields
\begin{equation}\label{eqn20}
z = \begin{pmatrix}1 & 1\end{pmatrix}\cdot\begin{pmatrix}a_1 & a_2\end{pmatrix}^T=a_1\oplus a_2,
\end{equation}
as shown below \eqref{eqn19}. 
This example was first described in the seminal paper of Hillery \textit{et al} \cite{PhysRevA.59.1829}. \textcolor{black}{Their scheme, however, is a particular instance of our general formalism.}

The above scheme can be generalized at once to $n>3$ generalized GHZ states over larger alphabets by using a classical MDS repetition code $[n,1,n]_q$ over $\FF_q$ with generator matrix $G=\begin{pmatrix}1 & 1 & \ldots & 1\end{pmatrix}$. The $\{\ket{\pm}\}$ measurement basis is now replaced by the Fourier basis $\{\ket{\bar x}\}_{x\in\FF_q}$. For a faithful decoding, the $n$th player must apply the operator $Z^{z}$, with
$z = \bigoplus_{i=1}^{n-1}a_i$, where the sum is taken over the elements of $\FF_q$.

\section{Conclusion}\label{sct4}
In summary, we have developed a qudit quantum secret-sharing protocol in which we reduce the quantum communication overhead among the players by enabling some quantum channels to be replaced by classical ones. Our scheme is based on embedding a classical linear code into a quantum code, then using the latter for the actual construction of the protocol. The size of the LOCC-assisting subsets is determined entirely by the error-correcting properties of the classical code. 

\textcolor{black}{The only quantum resource our protocol uses is quantum communication between the dealer and the players, or, equivalently, shared entanglement between the dealer and the players (the latter can be used to simulate quantum channels), and also quantum communication among the players in the subset $\ptc$ (that must recover the quantum secret). The LOCC-assisting subset $\pt$ is only performing local measurements then transmitting the measurement result to its complement. Our scheme is useful in any physical scenario where reducing the amount of expensive quantum communication is essential, such as secret sharing between players with restricted quantum communication ability, for example schemes involving satellites.}
 
As quantum secret-sharing schemes are a form of quantum error-correction, our results represent a first step towards attacking the challenging problem of minimizing the amount of quantum communication needed for decoding the latter. 

\section*{Acknowledgments}
We acknowledge financial support from the Natural Sciences and Engineering Research Council (NSERC) of Canada and from the Pacific Institute for the Mathematical Sciences (PIMS). B.C.S. acknowledges support from CIFAR and China's Thousand Talents Program.


%

\end{document}